\newcommand{\removelatexerror}{\let\@latex@error\@gobble}
\newcommand{\mbbE}{\mathbb{E}}
\newcommand{\mbXi}{\mathbf{\Xi}}
\newtheorem{theorem}{\hspace{-0.5em}\textbf{Theorem}}
\newtheorem{lemma}{\hspace{-0.5em}\textbf{Lemma}}
\newtheorem{corollary}{Corollary}
\begin{document}

	\title{Fast Capacity Estimation in  Ultra-dense Wireless Networks with Random Interference }

\author{Dandan Jiang,  Rui Wang, and Jiang Xue\textsuperscript{\Letter}, \textit{Member, IEEE}\thanks{		
Dandan Jiang,  Rui Wang and Jiang Xue are with School of Mathematics and Statistics, Xi'an Jiaotong University, Xi'an, China (e-mail: jiangdd@xjtu.edu.cn; 	wangrui\_math@stu.xjtu.edu.cn; x.jiang@xjtu.edu.cn). (Corresponding author:  Jiang Xue.)}}




\maketitle

\begin{abstract}
	
In wireless communication systems,  the accurate and reliable evaluation of channel capacity  is believed to be a fundamental and critical issue for terminals. 
However,  with the rapid development of wireless technology, large-scale communication networks  with significant random  interference have emerged, resulting in extremely high computational costs for  capacity calculation.
In  ultra-dense wireless networks with extremely large numbers of base stations (BSs) and users, 
we provide  fast estimation methods for  determining the capacity. We consider two scenarios  according to  the ratio of the number of users to the number of BSs, $\beta_m$. First, when  $\beta_m\leq1$, the  FIsher-Spiked Estimation (FISE) algorithm is proposed to determine the capacity   by modeling the channel matrix with  random interference as a  Fisher matrix. Second,
when $\beta_m>1$, based on a  closed-form expression for capacity estimation requiring solely simple computations, we prove that this estimation  stabilizes and remains invariant with increasing $\beta_m$.
Our methods can guarantee high accuracy on capacity estimation
 with low complexity, which is faster than  the existing methods.   Furthermore, our approaches exhibit excellent generality,  free of  network area shapes,  BS and user distributions, and sub-network  locations. 
Extensive simulation experiments  across various scenarios   demonstrate  the high accuracy and robustness of our methods. 

\end{abstract}

\begin{IEEEkeywords}
future wireless systems, capacity estimation, random matrix theory, spiked Fisher matrix
\end{IEEEkeywords}

\section{Introduction}
	
The channel capacity, defined as the maximum achievable rate at which information can be transmitted through a channel, plays a significant role in wireless communication systems. Accurate and reliable evaluation of channel capacity is considered as a fundamental and critical issue for terminal performance. 
As  mobile communication technology improves by leaps and bounds,  increasingly intricate wireless systems emerge, posing the challenge of  determining the capacity with  low complexity and high accuracy.
As a pioneering work, Dr. Claude E. Shannon originally proposed the definition of channel capacity and provided its calculation method \cite{Shannon1948}, which is  known as the Shannon-Hartley theorem.  
Based on this theorem,  the capacity of  an additive white Gaussian noise (AWGN) channel with bandwidth $W$ can be  calculated by $C=W \log \left(1+P/(N_0 W)\right)$ \cite{cover2006elements},
where  $P$ denotes signal power and $N_0$ represents  noise power. 
Subsequently,  the growing demand for increased transmission capacity has propelled the development of multiple-input multiple-output (MIMO) antennas.
 As detailed in \cite{TE1999multi}, the capacity of  a multi-user (MU)-MIMO channel with $t$ transmission antennas and $r$ receiving antennas can be expressed as follows: 
\begin{equation} \label{mu-mimo}
	C=\mathbb{E}\left\{\log \operatorname{det}\left(\mathbf{I}+\frac{P}{t} \mathbf{H} \mathbf{H}^*\right)\right\},
\end{equation}
where $\mathbf{H}$ is the channel gain matrix, and $(\cdot)^*$ represents the Hermitian transpose. The notations $\mathbb{E}$ and $\operatorname{det(\cdot)}$ represent the expectation and the determinant of a matrix, respectively.  When $t$ and $r$ are large, the expression in \eqref{mu-mimo}  requires the determinant calculation of a large-dimensional matrix,  consuming substantial computational resources. Existing methods for direct computation of 
$C$ in \eqref{mu-mimo}, such as Cholesky decomposition and singular value decomposition (SVD), have a complexity of  $O(t^3)$, which   is unacceptable for future ultra-dense networks.

Recently, to further enhance the  capacity of  wireless channels, a capacity-centric (C$^2$) network architecture has been designed for future wireless communications\cite{yang2022}.   The C$^2$ architecture divides the whole network  into $M$ non-overlapping clusters,  each   operating  in parallel. 
Only the base stations (BSs) within each cluster collaborate  to serve nearby users, and there is non-negligible interference among different clusters. 
Its  average uplink capacity of the $m$-th cluster per BS can be obtained   by \cite{yang2022}
\begin{align} \label{Cm_real1}
	C_{m}=\mathbb{E}\left\{\frac{1}{J_{m}} \log \operatorname{det}\left(\mathbf{I}+P \boldsymbol{\Xi}_{m}^{-1} \mathbf{H}_{m} \mathbf{H}_{m}^{*} \right)\right\}, 
\end{align}
where $J_{m}$\footnote{In this paper, we  only consider the situation of single-antenna BS. For multi-antenna BS, $J_m$ can also represent the total number of antennas within the $m$-th cluster.} is the number of BSs in  the $m$-th cluster, $\mathbf{H}_{m}$ and $\boldsymbol{\Xi}_{m}$ denote the channel gain matrix and the noise-plus-interference matrix in this cluster, respectively. Additionally, $P \boldsymbol{\Xi}_{m}^{-1} \mathbf{H}_{m} \mathbf{H}_{m}^{*}$ represents the signal to the interference plus noise ratio (SINR) matrix of the $m$-th cluster.  
It can be seen that  \eqref{Cm_real1} is more general and more complicated than \eqref{mu-mimo}, since the C$^2$ network can  be  seen as a multiple MU-MIMO channel with inter-cluster interference. 
Moreover, we note that there are several other architectures in the literature that are built  similarly to C$^2$, such as clustered cell-free \cite{wang2022rcn}, CGN \cite{deng2022cgn}, and others \cite{Dai2017, 6G2021, wang2023}.
Their capacity expressions are consistent with formula \eqref{Cm_real1}, making the methods presented in this paper applicable to these architectures as well.

Many approaches have been developed to tackle capacity calculations involving large-dimensional channel matrices, diverging from conventional determinant-calculation-based methods.
\cite{TuVer2004}  utilized the random matrix theory (RMT) to  derive the asymptotic channel capacity with implicit expressions as  the number of BSs and  users increases, which are  mostly applicable to channels using code-division multiple access schemes, restricting their extension to other  models.
Moreover, several  works are specifically dedicated to determining the capacity  in  \eqref{Cm_real1}, such as  \cite{yang2022, TOSE2022, MPM2022}.
In  \cite{yang2022}, a closed-form estimation for $C_m$ was provided based on the convergence of the SINR matrix to a diagonal matrix under some asymptotic conditions. 
The TOSE method in \cite{TOSE2022} used the limiting spectral theory of spiked covariance matrix to achieve fast eigenvalues estimation, but
it relies on an assumption that the noise-plus-interference matrix $\boldsymbol{\Xi}_{m}$  is deterministic,
 inconsistent with the inherent randomness of $\boldsymbol{\Xi}_{m}$ in practical scenarios. In addition,
the accuracy of this method is also unstable for a channel matrix with extremely uneven signal descend.
The MPM method introduced in \cite{MPM2022} was developed   to  estimate the capacity by approximating the spectral distribution of the SIRN matrix in \eqref{Cm_real1}. 
Notably, all  of the above methods  (\cite{yang2022, TOSE2022, MPM2022})  rely on the assumption that the  matrix $\boldsymbol{\Xi}_{m}$ converges to a diagonal form as the number of users approaches infinity.
However, as discussed in \cite{couillet_liao_2022}, the convergence of large-dimensional matrices  is not equivalent to the convergence of their spectra. 
Therefore, these capacity estimations increasingly deviate from reality  as $\beta_m$ decreases, where $\beta_m$ represents the ratio of the number of users to the number of BSs in the $m$-th cluster.
Consequently, a method for determining the capacity that can  simultaneously guarantee  high accuracy, low complexity, and superior generality   needs to be further explored.

In this paper, we propose  fast estimation methods for  calculating capacity in ultra-dense wireless networks with random interference,  effectively avoiding  the  high complexity in conventional determinant-calculation-based methods. We consider two cases:  $\beta_m\leq1$ and $\beta_m>1$. When $\beta_m\leq1$, the   FIsher-Spiked Estimation (FISE) algorithm is proposed  to fast and accurately estimate $C_m$ in \eqref{Cm_real1}, which employs the limiting spectral theory of spiked Fisher matrix  to realize a fast estimation of eigenvalues.
When $\beta_m>1$, based on the capacity estimation $\widetilde C_m$  proposed by \cite{yang2022} and detailed later in formula \eqref{limit},  we further prove that this  estimation reaches a constant value that is independent of $\beta_{m}$.
The major contributions of this work can be concluded in terms of the following.
\begin{itemize}
  	\item[(a)]  The FISE algorithm is of low complexity with high accuracy in capacity estimation.  Different from the existing methods,  FISE eliminates the diagonal assumption of  $\mathbf \Xi_m$  and instead adopts its true structure. The proposed estimation  also preserves the randomness of the interference matrix, which aligns more closely with real-world case. 	 	
   In terms of computational complexity, when provided with the SINR matrix, the complexity for  FISE itself amounts to only $O(J_m)$, which is less than the existing methods.
   	\item[(b)] The  stability of the average cluster capacity  estimation for $\beta_m>1$ is proved.
   	Specifically, based on the  closed-form estimation  $\widetilde C_m$  provided by \cite{yang2022}, we demonstrate that this  estimation  is a constant that does not rely on the value of $\beta_{m}$. 
   	Thus, when  $\beta_m>1$, we do not need to repeat the capacity calculation and can instead use this stable constant, significantly saving computing power resources.
	\item[(c)]  Our proposed methods exhibit  remarkable generality, which are free of   nodes (BSs and users) distributions,  network area shapes, and the  locations of clusters within the network. In the simulation experiments, both square and round network regions were designed, along with two types of nodes distributions. Besides, three clusters with representative locations were chosen to calculate the channel capacity.  The experimental results indicate that the our methods perform  well across various scenarios. 

\end{itemize}

The arrangement of the following content is as below. First,  the system model is  formulated in Section \ref{isec2}.
Section  \ref{isec3} elaborates on
the procedures  of FISE to determine the capacity based on  the spiked Fisher matrix when $\beta_m\leq1$.  
Section  \ref{isec4} introduces  a computationally simple expression for capacity estimation when $\beta_m>1$ and proves its stability. As a by-product, an explicit  expression  of the average cluster capacity estimation per BS with a constant user density is also derived.
Section \ref{isec5} shows simulation results on the  capacity estimation comparison between  our proposed methods and other existing methods  
under different scenarios. 
Finally, Section \ref{isec6} draws the conclusion.
 
\textit{Notations:} We use the following notations throughout this paper:  Lower case, boldface lower case, and  boldface upper case letters represent scalars, vectors, and matrices, respectively, like $h$, $\mathbf h$, and $\mathbf{H}$. The $(i, j)$-th entry of $\mathbf{H}$ is denoted by $[\mathbf{H}]_{ij}$. Scripts such as $\mathcal{B}$ represent the sets.  Operator $\mathcal{C N}(\mu, \nu)$ is a complex Gaussian distribution with mean $\mu$ and covariance $\nu$. $\left\|\cdot\right\|_{F}$ and $|\cdot|$ represent the  Frobenius norm and the absolute value. Moreover, other operators $\operatorname{tr(\cdot)}$, $(\cdot)^*$, $\operatorname{det(\cdot)}$, $\mathbb{E}\{\cdot\}$, and $(\cdot)^{-1}$ represent the trace,  Hermitian transpose, determinant, expectation,  and matrix inverse, respectively.

\section{Modeling  the System with random interference} \label{isec2}
In this section, we  introduce the system model for C$^2$ networks proposed in \cite{yang2022} considering massive random interference.  The network is comprised  of two types of nodes: (1) $J$ single-antenna BSs (or access points in distributed-antenna systems \cite{How2016, Bashar2019}), which is denoted by $\mathcal{B}=\left\{b_{1}, b_{2}, \ldots b_{J}\right\}$; (2) $K$ single-antenna users, which is denoted by   $\mathcal{U}=$ $\left\{u_{1}, u_{2}, \ldots, u_{K}\right\}$. 
Suppose the entire network is decomposed into $M$ non-overlapping clusters, each  operating in parallel. The schematic diagram is  depicted in Fig. \ref{fig1}, where each color represents  a separate cluster,  circles represent  BSs, and  triangles represent   users. The cluster marked by the black pentagon is denoted as the $m$-th cluster,  used as a target to calculate the capacity. For the $m$-th cluster, let $\mathcal{C}_{m}$ represent the union of the sets of BSs and users in this cluster, thus $\bigcup_{m=1}^{M} \mathcal{C}_{m}=\mathcal{B} \cup \mathcal{U}$. Furthermore, denote the number of BSs  in $\mathcal{C}_{m}$ as $J_{m}$ and the number of users  as $K_{m}$. 
To reflect the ultra-dense scenario of the network, we assume that $J_{m}, K_{m} \rightarrow \infty.$

\begin{figure}[htbp]
	\centering
	\vspace{-1em}
	\includegraphics[width=0.8\textwidth]{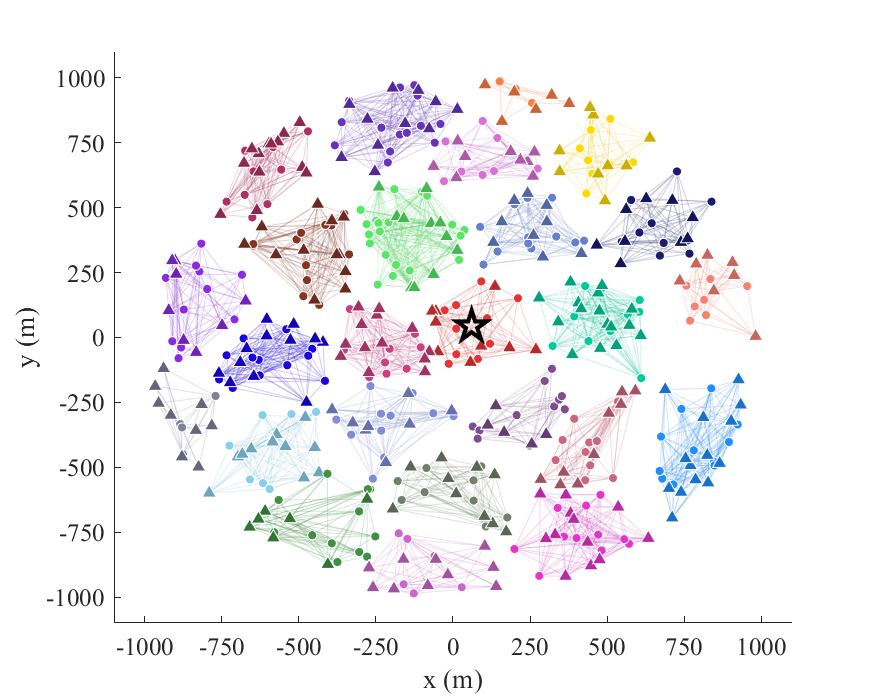}
	\caption{Schematic  diagram of the wireless network. Different colors represent different clusters. Circles are BSs, triangles are users. The cluster marked by the black pentagon is the  closest to the center of the network.}\label{fig1}
	\label{fig1_1}
\end{figure}

The channel gain between the BS $b_{j} \in \mathcal{C}_{m}$ and the user $u_{k} \in \mathcal{U}$ is modeled by
$$h_{m j k}=l_{m j k} g_{m j k},$$
where $g_{m j k} \sim \mathcal{C N}(0,1)$ is the small-scale fading, and
\begin{align}\label{lm}
	l_{m j k}= \begin{cases}d_{m j k}^{-1.75}, & d_{m j k}>d_{1} \\ d_{1}^{-0.75} d_{m j k}^{-1}, & d_{0}<d_{m j k} \leq d_{1} \\ d_{1}^{-0.75} d_{0}^{-1}, & d_{m j k} \leq d_{0}\end{cases}
\end{align} 
is the large-scale fading \cite{5G2020}. Here, $d_{m j k}$ is the Euclidean distance between the BS $b_{j} \in \mathcal{C}_{m}$ and the user $u_{k}$.  $d_{0}$ and $d_{1}$ are the near-field threshold and far-field threshold, respectively.

The uplink  signal model of the $m$-th cluster  is given by
\begin{align} \label{signal}
	\mathbf y_m=\sum_{u_{k} \in \mathcal{C}_{m}} \mathbf{h}_{m k}s_k+\sum_{u_{k} \notin \mathcal{C}_{m}} \mathbf{h}_{m k}s_k+\mathbf u_m,
\end{align} 
where  $\mathbf{h}_{m k}$ is a $J_m \times 1$ vector with $h_{mjk}$ as its $j$-th element, $s_k \sim \mathcal{C N}(0,P)$ is the signal of the user $u_k$, with $P$ being the transmit power of each user. Moreover,  $\mathbf u_m \sim \mathcal{C N}(0,N_0 \mathbf I)$ is the AWGN vector.
Define the channel gain matrix $\mathbf{H}_{m}$ as
$$
\mathbf{H}_{m}=\mathbf{L}_{m} \circ \mathbf{G}_{m},
$$ 
where $\mathbf{L}_{m}\in \mathbb{R}^{J_{m} \times K_{m}}$ is  the large-scale fading matrix, $\mathbf G_{m} \in \mathbb{C}^{J_{m}\times K_{m}}$ is the small-scaling fading matrix, with their $(j, k)$-th entries given by $\left[\mathbf{L}_{m}\right]_{j k}=l_{m j k}$ and $\left[\mathbf{G}_{m}\right]_{j k}=g_{m j k}$. Here, $\circ$ represents the Hadamard product. Besides, represent the noise-plus-interference matrix  as
\begin{equation} \label{NIR}
	\boldsymbol{\Xi}_{m}
	=N_{0} \mathbf{I}+P\sum_{u_{k} \notin \mathcal{C}_{m}} \mathbf{h}_{m k} \mathbf{h}_{mk}^{*}.
\end{equation}
Thus, the average capacity of cluster $m$ per BS can be obtained as referred in \cite{David2005, Shannon1948}
\begin{align}	\label{real}
	C_{m}&=\mathbb{E}\left\{\frac{1}{J_{m}} \log \operatorname{det}\left(\mathbf{I}+P \boldsymbol{\Xi}_{m}^{-1 / 2} \mathbf{H}_{m} \mathbf{H}_{m}^{*} \boldsymbol{\Xi}_{m}^{-1 / 2}\right)\right\} \notag \\
	&=\mathbb{E}\left\{\frac{1}{J_{m}} \log \operatorname{det}\left(\mathbf{I}+P \boldsymbol{\Xi}_{m}^{-1 / 2} (\mathbf{L}_{m} \circ \mathbf{G}_{m}) (\mathbf{L}_{m} \circ \mathbf{G}_{m})^{*} \boldsymbol{\Xi}_{m}^{-1 / 2}\right)\right\}.
\end{align}

In  existing works, people usually  consider only the randomness of the target channel matrix while regarding the interference matrix as fixed, which leads to the network system not adapting to real-time dynamic changes, and also brings  unstable capacity estimation errors. 
To address this issue,   our modeling process treats  both the target channel matrix and interference matrix as random matrices.
Define $\mathbf{B}_{m}=P \boldsymbol{\Xi}_{m}^{-1 / 2} (\mathbf{L}_{m} \circ \mathbf{G}_{m}) (\mathbf{L}_{m} \circ \mathbf{G}_{m})^{*} \boldsymbol{\Xi}_{m}^{-1 / 2}$. We can rewrite $\mathbf I+\mathbf{B}_{m}$ in the form of $\mathbf \Sigma_1\mathbf \Sigma_2^{-1}$, where $\mathbf\Sigma_1=\mathbf\Sigma_2+\mathbf\Delta$ and $\mathbf\Delta=\mathbf{B}_{m}\mathbf\Sigma_2$.
As a result, the  matrix $\mathbf I+\mathbf{B}_{m}$ in \eqref{real} is characterized as the ratio of two random covariance matrices, fitting the typical structure of Fisher matrices in RMT.
Furthermore,  $\mathbf I+\mathbf{B}_{m}$ always presents the dominant advantage of the first few eigenvalues, which is the so-called  spiked property.
To be specific, the matrix $\mathbf I+\mathbf{B}_{m}$ exhibits the characteristics of a spiked Fisher matrix as described in \cite{bai2010spectral}.
Therefore, in the following section, we will employ the spectral theory of the spiked Fisher matrix to develop a fast  method for estimating the  capacity based on the formula \eqref{real}.

%

\section{ FIsher-Spiked Estimation for determining  the capacity} \label{isec3}
 Assume that  $K_m/J_m \rightarrow \beta_m$  as both $J_m$ and $K_m$ tend to infinity.  This section proposes a FIsher-Spiked  Estimation (FISE) algorithm to determine the average capacity  per cluster in \eqref{real} when $\beta_m\leq 1$,  leveraging the spectral theory of spiked Fisher matrices.
The capacity computation is divided into two parts. One part is achieved by the fast estimation of top  $R$ spiked eigenvalues, while the other part  is estimated by the limiting spectral distribution of the  remaining non-spiked eigenvalues.

\subsection{The approximate signal model} \label{transform}

As shown in \eqref{signal} and \eqref{real}, the uplink  signal model of the $m$-th cluster  involves Hadamard product  $\mathbf{L}_{m} \circ \mathbf{G}_{m}$. However, existing studies on the Hadamard product, as discussed in \cite{Girko2001, Hachem2007, Hachem2008, Silver2023}, often yield results that are implicit and cannot directly  apply to practical scenarios.
Therefore, based on Theorem 1 in \cite{TOSE2022}, we consider  approximating the signal model \eqref{signal} without the diagonal assumption of $\mathbf \Xi_m$ but  preserving its randomness. We replace the Hadamard product $\mathbf{L}_{m} \circ \mathbf{G}_{m}$ with the matrix product $ \widetilde{\mathbf{L}}_{m} \mathbf{G}_{m}$, where  
\begin{equation} \label{appro_mat}
	\widetilde{\mathbf{L}}_{m}=\operatorname{diag}\left(l_{m 1k}, \ldots, l_{m J_{m}k}\right),
\end{equation}
and $l_{m jk}=1/K_{m} \sum_{k=1}^{K_{m}} l_{m j k}$.   
This approximation  forms the foundation of the design of the FISE algorithm, and its optimality can be proven in the following lemma.  The proof  is similar to that of Theorem 1 in \cite{TOSE2022}, except for the target matrix  $\mathbf{L}_m$ instead of their $\mathbf{Q}_m = P^{1/2}\mathbf{\Xi}_m^{-1/2}\mathbf{L}_m$.

\begin{lemma}\label{lem1}
	For any matrix $\breve{\mathbf{L}}_{m}$, define
	$$
	\boldsymbol{\Delta}_{m}=\mathbf{L}_{m} \circ \mathbf{G}_{m}-\breve{\mathbf{L}}_{m} \mathbf{G}_{m}.
	$$
	$\mathbb{E}\left(\left\|\boldsymbol\Delta_{m}\right\|_{F}^{2}\right)$ reaches the minimum value  if and only if $\breve{\mathbf{L}}_{m}$ takes the value of (\ref{appro_mat}), and the minimum is
	$$
	\mathbb{E}\left(\left\|\boldsymbol\Delta_{m}\right\|_{F}^{2}\right)|_{\min }=\sum_{j=1}^{J_m} \left[\sum_{k=1}^{K_m} l_{mjk}^2 - \frac{1}{K_m}\left(\sum_{k=1}^{K_m} l_{mjk}\right)^2\right].
	$$
\end{lemma}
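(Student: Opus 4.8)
The plan is to reduce the matrix-valued minimization to an elementary, entrywise least-squares problem. Since $\mathbf{L}_{m}$ is deterministic and the only randomness sits in $\mathbf{G}_{m}$, whose entries $g_{mik}$ are i.i.d.\ $\mathcal{CN}(0,1)$, I would first write the $(j,k)$-th entry of $\boldsymbol{\Delta}_{m}$ explicitly. Writing $\breve{\mathbf{L}}_{m}=(a_{ji})$ as an arbitrary $J_{m}\times J_{m}$ matrix, the entry is
\begin{equation*}
[\boldsymbol{\Delta}_{m}]_{jk} = l_{mjk}\,g_{mjk} - \sum_{i=1}^{J_{m}} a_{ji}\,g_{mik} = \sum_{i=1}^{J_{m}}\big(l_{mjk}\delta_{ij} - a_{ji}\big)\,g_{mik},
\end{equation*}
so each entry of $\boldsymbol{\Delta}_{m}$ is a fixed linear combination of the independent Gaussians forming column $k$ of $\mathbf{G}_{m}$.

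The key step is the expectation. Because $\mathbb{E}[g_{mik}\overline{g_{mi'k}}]=\delta_{ii'}$ and the circular symmetry of the complex Gaussian kills the $\mathbb{E}[g_{mik}g_{mi'k}]$ cross terms, all interactions collapse and I obtain
\begin{equation*}
\mathbb{E}\big|[\boldsymbol{\Delta}_{m}]_{jk}\big|^{2} = \big|l_{mjk} - a_{jj}\big|^{2} + \sum_{i\neq j} |a_{ji}|^{2}.
\end{equation*}
Summing over $j$ and $k$ gives $\mathbb{E}(\|\boldsymbol{\Delta}_{m}\|_{F}^{2}) = \sum_{j}\big[\sum_{k}|l_{mjk}-a_{jj}|^{2} + K_{m}\sum_{i\neq j}|a_{ji}|^{2}\big]$, a deterministic quantity that decouples completely across rows and across the entries within each row. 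The off-diagonal entries enter only through the nonnegative term $K_{m}\sum_{i\neq j}|a_{ji}|^{2}$, so at any minimizer they must all vanish; this forces $\breve{\mathbf{L}}_{m}$ to be diagonal. For the surviving diagonal term I would solve the one-dimensional problem $\min_{a_{jj}}\sum_{k}|l_{mjk}-a_{jj}|^{2}$, whose unique minimizer is the row mean $a_{jj}=\frac{1}{K_{m}}\sum_{k}l_{mjk}$, recovering exactly $\widetilde{\mathbf{L}}_{m}$ in \eqref{appro_mat}. Strict convexity of each scalar subproblem simultaneously settles the uniqueness and hence the ``only if'' direction.

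Finally, I would substitute this minimizer back: using $\sum_{k}l_{mjk}=K_{m}a_{jj}$, the per-row residual collapses to $\sum_{k}l_{mjk}^{2}-K_{m}a_{jj}^{2}=\sum_{k}l_{mjk}^{2}-\frac{1}{K_{m}}(\sum_{k}l_{mjk})^{2}$, and summing over $j$ yields the claimed minimum value. I expect no genuine obstacle here beyond careful bookkeeping; the one point demanding attention is the second-moment computation for the complex Gaussian entries, where circular symmetry must be invoked to discard the cross terms and retain only $\mathbb{E}[|g_{mik}|^{2}]=1$, after which the entire argument is elementary convex minimization. This mirrors the structure of the proof of Theorem~1 in \cite{TOSE2022}, with $\mathbf{L}_{m}$ playing the role of their $\mathbf{Q}_{m}$.
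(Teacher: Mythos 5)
Your proof is correct and is essentially the same argument the paper relies on: the paper gives no self-contained proof but defers to Theorem~1 of \cite{TOSE2022} with $\mathbf{L}_m$ in place of their $\mathbf{Q}_m$, which is precisely the entrywise second-moment computation and row-decoupled least-squares minimization you carry out (including the step showing any minimizer must be diagonal). One minor remark: circular symmetry is not actually needed, since expanding $\mathbb{E}\bigl[[\boldsymbol{\Delta}_m]_{jk}\overline{[\boldsymbol{\Delta}_m]_{jk}}\bigr]$ only ever produces terms of the form $\mathbb{E}\bigl[g_{mik}\overline{g_{mi'k}}\bigr]$, which vanish for $i\neq i'$ by independence and zero mean alone.
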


Lemma \ref{lem1} demonstrates that the error between matrices $\mathbf{L}_{m} \circ \mathbf{G}_{m}$ and $ \breve{\mathbf{L}}_{m} \mathbf{G}_{m}$ is minimized when $\breve{\mathbf{L}}_{m}$ takes the value specified in (\ref{appro_mat}). By substituting $\mathbf{L}_{m} \circ \mathbf{G}_{m}$  with the matrix product $ \widetilde{\mathbf{L}}_{m} \mathbf{G}_{m}$, we shift our focus  to calculating the following  capacity approximation 
\begin{align}
    \widehat C_{m}=\mathbb{E}\left\{\frac{1}{J_{m}} \log \operatorname{det}\left(\mathbf{I}+P \boldsymbol{\Xi}_{m}^{-1 / 2} \widetilde{\mathbf{L}}_{m} \mathbf{G}_{m}\mathbf{G}_{m}^*\widetilde{\mathbf{L}}_{m}^* \boldsymbol{\Xi}_{m}^{-1 / 2} \right)\right\}.
	\label{appro} 
\end{align}

\subsection{FISE algorithm   for $\beta_m\leq 1$} \label{FISE}

For the  approximate capacity given in  \eqref{appro},  
we first concentrate on the scenario where $\beta_m\leq 1$, i.e., $K_m\leq J_m$. 
Define $\mathbf{P}_{m}=P \boldsymbol{\Xi}_{m}^{-1 / 2} \widetilde{\mathbf{L}}_{m} \mathbf{G}_{m}\mathbf{G}_{m}^*\widetilde{\mathbf{L}}_{m}^* \boldsymbol{\Xi}_{m}^{-1 / 2}$, and we can obtain the spectral decomposition of the matrix $\mathbf I+\mathbf P_m$ as
$$\mathbf I+\mathbf P_m=\mathbf U \mathbf \Lambda \mathbf U^*, $$
where $\mathbf \Lambda=\text{diag}(\lambda _1,\cdots,\lambda _{J_m})$ is a diagonal matrix with the descending eigenvalues of $\mathbf I+\mathbf P_m$, and 
 $\mathbf U$ is a $J_m \times J_m$ unitary matrix consisting of the  corresponding eigenvectors. 
 Then, the capacity $\widehat C_m$ in \eqref{appro} can be determined by summing the logarithms of eigenvalues:
\begin{equation}\label{Cm_eig}
	\widehat C_m= \frac{1}{J_{m}} \mbbE \left\{\log \operatorname{det}\left(\mathbf I+\mathbf{P}_{m}\right) \right\}=\frac{1}{J_{m}} \sum_{j=1}^{J_m} \mbbE \left\{\log \left(\lambda_{j}\right)\right\}.
\end{equation}

As  mentioned in Section \ref{isec2}, the matrix $\mathbf I+\mathbf{P}_{m}$ can also be rewritten as $\mathbf \Sigma_1\mathbf \Sigma_2^{-1}$, where $\mathbf\Sigma_1=\mathbf\Sigma_2+\mathbf\Delta$ and $\mathbf\Delta=\mathbf{P}_{m}\mathbf\Sigma_2$. Therefore,  the matrix $\mathbf I+\mathbf{P}_{m}$  is also of the spiked Fisher matrix type, when  $\mathbf{P}_{m}$ is a low rank matrix.
 We denote the rank of the matrix $\mathbf P_m$ as $R$, then $R \leq \min(J_m,K_m)$.  Further, $\text{rank}(\mathbf\Delta)=\text{rank}(\mathbf{P}_{m})=R$. If   $R$ is  small compared  to $J_m$,  $\mathbf \Sigma_1\mathbf \Sigma_2^{-1}$ is a standard  spiked Fisher matrix that has been extensively studied in the literature, such as \cite{wang2016, Jiang2021}. 
If  $R$ diverges as $J_m$ approaches infinity but still exhibits a few dominant eigenvalues, $\mathbf\Sigma_1\mathbf \Sigma_2^{-1}$  
remains  a spiked Fisher matrix with a diverging number of spikes,  as discussed  in \cite{xie2021} and  \cite{zheng2023}.

For each  observation of  $\mathbf I+\mathbf{P}_{m}$ in practical scenarios,  we can similarly rewrite it in the form of  $\mathbf{S}_1\mathbf{S}_2^{-1}$, where 
 $\mathbf{S}_1$ and $\mathbf{S}_2$ are regarded as the corresponding sample covariance matrices of $\mathbf \Sigma_1$ and $\mathbf \Sigma_2$,  respectively. Thus, we study the sample limiting properties of $\mathbf{S}_1\mathbf{S}_2^{-1}$ to capture the underlying population information. 
Denote the descending eigenvalues of $\mathbf{S}_1\mathbf{S}_2^{-1}$ by $\rho_{1}\geq \rho_{2} \geq \cdots \geq \rho_{J_m}$, then the capacity in \eqref{Cm_eig} can be further approximated as
\begin{align}\label{Cm_appro}
	\widehat C_m \approx \frac{1}{J_{m}} \sum_{j=1}^{J_m}  \mbbE \left\{\log \rho_{j}\right\}.
\end{align}

Distinguishing from \cite{TOSE2022}, which requires selecting the number of  spiked eigenvalues by adjusting parameters in advance, this study is free of  tuning parameters.
We just assume that the number of spikes in $\mathbf S_1\mathbf S_2^{-1}$  is exactly $R$, the rank  of the matrix $\mathbf{P}_{m}$.  According to \cite{bai2010spectral}, as $J_{m}, K_{m}\rightarrow \infty$, $K_m/J_m\rightarrow \beta_m\in(0,1)$, and $J_m/(K-K_m)\rightarrow y_m\in(0,1) $, the limiting spectral distribution (LSD) generated from the $J_m-R$ non-spiked eigenvalues of $\mathbf S_1\mathbf S_2^{-1}$ has the following density function
\begin{equation} \label{lsd}
	p_{\beta_m, y_m}(x)= \begin{cases}\displaystyle\frac{\beta_m\left(1-y_{m}\right) }{2 \pi x\left(1+\beta_m  y_{m}x\right)}\sqrt{(b_m-x)(x-a_m)}, &  a_m\leq x\leq b_m, \\ 0, & \text { else},\end{cases}
\end{equation}
and has a point mass $1-\beta_m$ at the origin, where $a_m,b_m=(1\mp\mu)^2/(1-y_{m})^2$ with $\mu=\sqrt{(1+\beta_m y_m-y_m)/\beta_m}$. 
As discussed in \cite{bai2010spectral}, the top $R$ spiked eigenvalues are located outside the support set $[a_m, b_m]$, while the rest $J_m-R$ non-spiked eigenvalues lie within the interval $[a_m, b_m]$.
Therefore, the capacity expression in \eqref{Cm_appro}  can be further split into two terms as
\begin{eqnarray} \label{split}
	\widehat C_m &\approx& \frac{1}{J_{m}} \mbbE \left\{\sum_{j=1}^{R} \log \rho_{j}\right\}+\frac{1}{J_{m}} \mbbE \left\{\sum_{k=R+1}^{J_m} \log \rho_{k}\right\}
	\triangleq C_{m1} +C_{m2}.
\end{eqnarray} 
Then, we will  provide the estimates  for $C_{m1}$ and $C_{m2}$  separately.

We first concentrate on estimating $C_{m1}$. As aforementioned, the spiked eigenvalues $\rho_{1} \geq \rho_{2} \geq \cdots \geq \rho_{R}$ are outside of the support set $[a_m, b_m]$ as $J_{m}, K_{m}\rightarrow \infty$ and $K_m/J_m\rightarrow \beta_m$. These eigenvalues are expected to be greater than 
$b_m$, thus   $b_m$ can be regarded as their lower bound, that is,  $\rho_{1} \geq \rho_{2} \geq \cdots \geq \rho_{R}>b_m$. 
For simple calculation, we 
suppose that $\rho_{1}, \rho_{2},\cdots, \rho_{R}$ are evenly spaced outside the support set with  an equal step of $\Delta \rho$, then  we can obtain their estimates as
\begin{equation}\label{eig_appro}
	\hat\rho_{j}=b_m+(R+1-j) \Delta \rho, j=1,2, \cdots, R.
\end{equation}
These eigenvalues should satisfy
\begin{equation}\label{condition}
	\sum_{j=1}^{R} \hat\rho_{j} \approx R+\operatorname{tr} (\mathbf{P}_{m}).
\end{equation}
Combining the formulas \eqref{eig_appro} and \eqref{condition}, we can obtain the value of the step $\Delta \rho$ as
\begin{equation}\label{space}
	\Delta \rho=\frac{2\left(\operatorname{tr} (\mathbf{P}_{m})+R-R b_m\right)}{R(R+1)}. 
\end{equation}
By substituting \eqref{space}  into  \eqref{eig_appro},  the values of  all approximate spiked eigenvalues $\hat\rho_{j}, j=1,\cdots, R$ can be  obtained. Consequently, $C_{m1}$ in (\ref{split}) can be  estimated by
\begin{equation}\label{Cm1}
	\widehat C_{m1} = \frac{1}{J_{m}} \sum_{j=1}^{R} \log \hat\rho_{j}.
\end{equation}

Next, to estimate  $C_{m2}$, we utilize the LSD of the Fisher matrix as defined in \eqref{lsd}. 
The term $C_{m2}$ is calculated by the $J_m-R$ non-spiked eigenvalues $\rho_{R+1},\cdots,\rho_{J_m}$, which are located within the support set of the LSD. Furthermore, since $\mathbf{P}_{m}$ ia a non-negative definite matrix, the eigenvalues of  $\mathbf I+\mathbf{P}_{m}$
 should be greater than 1. Then, by  the spectra convergence in the Fisher matrix, $C_{m2}$ can be approximated by the following integral
\begin{equation}\label{Cm2}
	\widehat C_{m2} =\int_{\max(1,a_m)}^{b_m} \log(x)p_{\beta_m, y_{m}}(x)dx.
\end{equation}

Finally, an estimation of the capacity in \eqref{Cm_appro}  is obtained by  combining the results from \eqref{Cm1} and \eqref{Cm2}. 

The algorithm for determine $C_m$ for $\beta_m \leq 1$  is summarized as below.
\begin{algorithm}[htbp]
	\caption{FIsher-Spiked Estimation (FISE)}
	\begin{algorithmic}
		\STATE 
		\STATE \textbf{Input:} $\mathbf{P}_{m}, K, J_{m}, K_{m}$, $R$. 
		\STATE \textbf{Output:} Estimation of ${C}_{m}$.
		\STATE \hspace{0.5cm} 1: Calculate $\mathrm{tr} (\mathbf{P}_{m})$.
		\STATE \hspace{0.5cm} 2: Calculate $\Delta \rho={2\left[\operatorname{tr} (\mathbf{P}_{m})+R-R b_m\right]}/{[R(R+1)]}$, where $R$ is the rank of $\mathbf{P}_{m}$ and $b_m$ is the right endpoint of the support set in the density function in \eqref{lsd}.
		\STATE \hspace{0.5cm} 3: Compute $\hat\rho_{j}=b_m+(R+1-j) \Delta \rho, j=1, \cdots, R$.
		\STATE \hspace{0.5cm} 4: Compute $\widehat{C}_{m1}$ according to \eqref{Cm1}.
		\STATE \hspace{0.5cm} 5: Compute $\widehat{C}_{m2}$ according to \eqref{Cm2}.
		\STATE \hspace{0.5cm} 6: Add $\widehat{C}_{m2}$ to $\widehat{C}_{m1}$, and the sum is used as the estimation of $C_m$.
	\end{algorithmic}
	\label{alg2}
\end{algorithm}

It is worth mentioning that,  according to the algorithm procedure,  the time complexity of the FISE algorithm amounts to $O(J_m)$, which is equivalent to the complexity of  TOSE, lower than the complexity of the MPM algorithm  $O(J_m^2)$, and much lower than that of determinant-calculation-based methods  $O(J_m^3)$.

\section{Capacity stability for $\beta_m>1$} \label{isec4}
This section introduces a simplified capacity estimate that requires only basic computations when $\beta_m>1$.  We then theoretically prove that this estimation  is a constant value, which is free  of the parameter $\beta_m$.

The Lemma 1 in \cite{yang2022} states that the matrix $\mbXi_m^{-1/2}\mathbf H_m\mathbf H_m^*\mbXi_m^{-1/2}$ converges to a positive definite diagonal matrix  ${\mathbf R}_m$ as  $K_m$ approach infinity, where
$$
{\mathbf R}_m =\operatorname{diag}\left(r_{11}, \ldots, r_{jj}, \ldots, r_{J_m J_m}\right)
$$
with
$$r_{jj}^m=\frac{\sum_{ u_k\in \mathcal{C}_{m}}l_{mjk}^2}{N_0+P\sum_{ u_k\notin \mathcal{C}_{m}} l_{mjk}^2}.$$

However, as mentioned in the Introduction, for small values of $\beta_m<1$,  even if a large-dimensional matrix converges to a  limiting matrix,   its corresponding spectrum does not converge to  that of the  limiting matrix. Thus, the  diagonal limiting assumption for   the matrix $\mbXi_m^{-1/2}\mathbf H_m\mathbf H_m^*\mbXi_m^{-1/2}$ results in significant errors in capacity estimation. But when $\beta_m>1$, the error between the two spectra decreases as $\beta_m$ increases.
Therefore, the following expression can be regarded as a reasonable estimate of $C_m$ for large $\beta_m>1$.
\begin{align}
	\widetilde C_m&=\lim_{K_m\rightarrow\infty}\mathbb{E}\left\{\frac{1}{J_m}\log_2 \operatorname{det}(\mathbf I+P{\mathbf R}_m )\right\} \notag\\
	&=\lim_{K_m\rightarrow\infty}\frac{1}{J_m}\sum_{j=1}^{J_m}\log_2\left(1+\frac{P\sum_{ u_k\in \mathcal{C}_{m}}l_{mjk}^2}{N_0+P\sum_{ u_k\notin \mathcal{C}_{m}} l_{mjk}^2}\right). \label{limit}
\end{align}
Hence, the expression \eqref{limit} provides a straightforward approach to computing the capacity, reducing the complexity from high-dimensional matrix operations to simple numerical calculations.

When calculating capacity for the different values of 
$\beta_m$, the method given in  \cite{yang2022} involves performing a repeated calculation  for each $\beta_m$ according to formula \eqref{limit}. However,  we find that the capacity  estimate \eqref{limit} is actually a stable value, and we can utilize the property thereby avoiding  repeated calculations. 
To prove this in detail, we introduce the following transformation.
Under the assumption of $K_m \rightarrow \infty$, 
the expression of $r_{jj}^m$ can be  transformed by replacing the discrete 
distributions of the network nodes with continuous density as follows.
$$
\lim_{K_m\rightarrow\infty}\frac{\sum_{ u_k\in \mathcal{C}_{m}}l_{mjk}^2}{N_0+P\sum_{ u_k\notin \mathcal{C}_{m}} l_{mjk}^2}=\frac{\int_{\mathbf{y} \in \mathcal{D}_m} f(\mathbf{x}-\mathbf{y}) \rho_u(\mathbf{y}) d \mathbf{y}}{N_0+P \int_{\mathbf{y} \in \mathcal{D}_0 \backslash \mathcal{D}_m} f(\mathbf{x}-\mathbf{y}) \rho_u(\mathbf{y}) d \mathbf{y}}, \quad\left(m=1,2, \ldots, J_m\right),
$$
where $\mathcal{D}_0$ denotes the two-dimensional region spanning the entire network, and  $\mathcal{D}_m \subseteq \mathcal{D}_0$ denotes the region spanned by the $m$-th cluster. Additionally,  $\mathbf{x}$ and $\mathbf{y}$ represent the location coordinates of the BSs and users, respectively,  $\rho_u(\mathbf{y})$ is the density function of users distributed over $\mathcal{D}_0$, and $f(\mathbf{x}-\mathbf{y})=\gamma d^{-\epsilon}_{\mathbf{x}\mathbf{y}}$, where    
\begin{align*}
	\gamma= \begin{cases}1, ~ d_{\mathbf{x y}}>d_1, \\
		d_1^{-1.5}, ~ d_0<d_{\mathbf{x y}} \leq d_1, \\
		d_1^{-1.5} d_0^{-2}, ~ 0<d_{\mathbf{x y}} \leq d_0,\end{cases}  
	\epsilon= \begin{cases}3.5,~  d_{\mathbf{x y}}>d_1. \\
		2,  ~d_0<d_{\mathbf{x y}} \leq d_1, \\
		0, ~ 0<d_{\mathbf{x y}} \leq d_0.\end{cases}
\end{align*}
Thus, the  average cluster capacity estimation per BS is given by \cite{yang2022} 
\begin{equation} \label{int_limit}
	\widetilde C_m=\log \frac{\frac{N_0}{P}+\int_{\mathbf{y} \in \mathcal{D}_0} f\left(\mathbf{x}_{m j}-\mathbf{y}\right) \rho_u(\mathbf{y}) d \mathbf{y}}{\frac{N_0}{P}+\int_{\mathbf{y} \in \mathcal{D}_0 \backslash \mathcal{D}_m} f\left(\mathbf{x}_{m j}-\mathbf{y}\right) \rho_u(\mathbf{y}) d \mathbf{y}},
\end{equation}
for some BS $b_j$ in the $m$-th cluster with coordinate denoted by 
$\mathbf{x}_{m j}$.

By  replacing the discrete distributions of the network nodes with continuous density and focusing on the interference-limited regime, where the background noise is negligible, we can prove that the average cluster capacity estimation in \eqref{limit}  is a stable value that remains invariant with respect to  $\beta_m$, as described in the following theorem.
\begin{theorem}\label{th1}
	 For ultra-dense wireless networks,  the average cluster capacity estimate per BS $\widetilde C_m(\beta_{m})$ is a constant value that is independent of $\beta_{m}(>1)$, where $\widetilde C_m(\beta_{m})$ is calculated using formula \eqref{limit}  corresponding to the ratio $\beta_m$. Specifically,  for any two values $\beta_{m1}\neq\beta_{m2}$ and $\beta_{m1},~ \beta_{m2}>1$,   the conclusion that $\widetilde C_m(\beta_{m1})=\widetilde C_m(\beta_{m2})$ holds.
\end{theorem}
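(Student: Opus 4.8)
The plan is to work directly from the integral representation \eqref{int_limit} of the capacity estimate rather than from the discrete sum \eqref{limit}, since the continuous-density approximation has already converted the per-BS SINR into a ratio of two spatial integrals of the channel kernel $f$ against the user density $\rho_u$. The central observation is that the parameter $\beta_m$ enters \eqref{int_limit} only through the \emph{overall magnitude} of $\rho_u$, so the strategy is to isolate that magnitude and exhibit its cancellation between numerator and denominator inside the logarithm.

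First I would make explicit how $\rho_u$ depends on $\beta_m$. Holding the network geometry fixed --- that is, fixing the regions $\mathcal{D}_0$ and $\mathcal{D}_m$, the BS count $J_m$, and the normalized spatial \emph{shape} of the user distribution --- increasing $\beta_m = K_m/J_m$ amounts to enlarging the total user population while keeping its spatial profile unchanged. Writing $\rho_u(\mathbf{y}) = \beta_m\, g(\mathbf{y})$, where $g$ is a fixed density shape independent of $\beta_m$, captures this precisely: if $\rho_u = K\,p$ with $\int_{\mathcal{D}_0} p = 1$, then $\beta_m = K\pi_m/J_m$ with $\pi_m = \int_{\mathcal{D}_m} p$ fixed, so $\rho_u = \beta_m (J_m/\pi_m)\, p$ and $g = (J_m/\pi_m)\,p$ carries no $\beta_m$ dependence.

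Second, I would invoke the interference-limited regime stated just before the theorem, in which the background-noise contribution $N_0/P$ is negligible relative to the interference integrals. Discarding that term reduces \eqref{int_limit} to
\[
\widetilde C_m = \log \frac{\int_{\mathbf{y}\in\mathcal{D}_0} f(\mathbf{x}_{mj}-\mathbf{y})\,\rho_u(\mathbf{y})\,d\mathbf{y}}{\int_{\mathbf{y}\in\mathcal{D}_0\setminus\mathcal{D}_m} f(\mathbf{x}_{mj}-\mathbf{y})\,\rho_u(\mathbf{y})\,d\mathbf{y}}.
\]
Substituting $\rho_u = \beta_m g$ and pulling the constant $\beta_m$ out of both integrals makes it cancel inside the logarithm, leaving an expression built only from $f$, $g$, and the fixed regions --- none of which depends on $\beta_m$. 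Since this cancellation holds for every BS coordinate $\mathbf{x}_{mj}$, it survives the average over BSs implicit in \eqref{limit}, and therefore $\widetilde C_m(\beta_{m1}) = \widetilde C_m(\beta_{m2})$ for any $\beta_{m1},\beta_{m2}>1$.

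The main obstacle I anticipate is not the algebraic cancellation, which is immediate once the setup is fixed, but rather justifying the two structural assumptions cleanly: establishing that the interference-limited approximation (discarding $N_0/P$) is the exact regime in which constancy holds, and arguing carefully that the shape $g$ is genuinely independent of $\beta_m$ under the continuous-density limit. I would also note explicitly why the hypothesis $\beta_m>1$ appears --- it is inherited from the validity of the diagonal-limit approximation (Lemma~1 of \cite{yang2022}) underpinning \eqref{limit}, not from the cancellation itself, which would formally go through for any $\beta_m$.
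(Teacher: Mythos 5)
Your proposal is correct and follows essentially the same route as the paper's proof: both rest on the observation that, with the geometry and $J_m$ fixed, the user density scales proportionally with $\beta_m$ (your $\rho_u = \beta_m g(\mathbf{y})$ is the paper's ratio identity $\rho_{\beta_{m2}}(\mathbf{y})/\rho_{\beta_{m1}}(\mathbf{y}) = \beta_{m2}/\beta_{m1}$), so the factor cancels inside the logarithm once the $N_0/P$ term is discarded in the interference-limited regime. Your added remarks --- the explicit normalization argument for why the shape $g$ is $\beta_m$-independent, and the clarification that the hypothesis $\beta_m>1$ comes from the validity of the diagonal-limit approximation rather than from the cancellation --- are refinements of presentation, not a different argument.
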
 

\begin{proof}
	If no confusion, represent $\rho_{\beta_m}(\mathbf{y})$ as the value of $\rho_{u}(\mathbf{y})$ at $\beta_m$. Then the average cluster capacity estimate at $\beta_{m1}>1$  can be obtained as
	\begin{align} \label{beta_0}
		\widetilde C_m(\beta_{m1})=\log \frac{\frac{N_0}{P}+\int_{\mathbf{y} \in \mathcal{D}_0} f\left(\mathbf{x}_{m j}-\mathbf{y}\right) \rho_{\beta_{m1}}(\mathbf{y}) d \mathbf{y}}{\frac{N_0}{P}+\int_{\mathbf{y} \in \mathcal{D}_0 \backslash \mathcal{D}_m} f\left(\mathbf{x}_{m j}-\mathbf{y}\right) \rho_{\beta_{m1}}(\mathbf{y}) d \mathbf{y}}.
	\end{align}
	Since 	\begin{align} \label{int}
		\int_{\mathbf{y} \in \mathcal{D}_m} \rho_{\beta_m}(\mathbf{y}) d \mathbf{y}=K_m=\beta_m J_m,
	\end{align}
	thus $\int_{\mathbf{y} \in \mathcal{D}_m} \rho_{\beta_m}(\mathbf{y}) /\beta_m d \mathbf{y}= J_m$. By setting $J_m$ to a large and fixed value, we isolate the quantity $\beta_m$ to study its impact on network capacity $\widetilde C_m(\beta_{m})$.  Therefore, 
	when the distribution of user nodes remains unchanged, an increase in $\beta_m$ will lead to  a proportional increase in density $\rho_{\beta_m}(\mathbf{y})$, i.e.
	$$
	\frac{\rho_{\beta_{m2}}(\mathbf{y})}{\rho_{\beta_{m1}}(\mathbf{y})}=\frac{\beta_{m2}}{\beta_{m1}},  
	$$
	where $\beta_{m2}\neq\beta_{m1}$ and $\beta_{m1},~ \beta_{m2}>1$.  Therefore,  $\widetilde C_m(\beta_{m2})$ can be calculated as	
	\begin{align}
		\widetilde C_m(\beta_{m2})&=\log \frac{\frac{N_0}{P}+\int_{\mathbf{y} \in \mathcal{D}_0} f\left(\mathbf{x}_{m j}-\mathbf{y}\right) \rho_{\beta_{m2}}(\mathbf{y}) d \mathbf{y}}{\frac{N_0}{P}+\int_{\mathbf{y} \in \mathcal{D}_0 \backslash \mathcal{D}_m} f\left(\mathbf{x}_{m j}-\mathbf{y}\right) \rho_{\beta_{m2}}(\mathbf{y}) d \mathbf{y}} \notag\\
		&=\log \frac{\frac{N_0}{P}+\frac{\beta_{m2}}{\beta_{m1}}\int_{\mathbf{y} \in \mathcal{D}_0} f\left(\mathbf{x}_{m j}-\mathbf{y}\right) \rho_{\beta_{m1}}(\mathbf{y}) d \mathbf{y}}{\frac{N_0}{P}+\frac{\beta_{m2}}{\beta_{m1}}\int_{\mathbf{y} \in \mathcal{D}_0 \backslash \mathcal{D}_m} f\left(\mathbf{x}_{m j}-\mathbf{y}\right) \rho_{\beta_{m1}}(\mathbf{y}) d \mathbf{y}}\notag \\
		&=\log \frac{\frac{N_0}{P}\frac{\beta_{m1}}{\beta_{m2}}+\int_{\mathbf{y} \in \mathcal{D}_0} f\left(\mathbf{x}_{mj}-\mathbf{y}\right) \rho_{\beta_{m1}}(\mathbf{y}) d \mathbf{y}}{\frac{N_0}{P}\frac{\beta_{m1}}{\beta_{m2}}+\int_{\mathbf{y} \in \mathcal{D}_0 \backslash \mathcal{D}_m} f\left(\mathbf{x}_{m j}-\mathbf{y}\right) \rho_{\beta_{m1}}(\mathbf{y}) d \mathbf{y}} \label{beta_1}
	\end{align}
	When  the background noise is ignored, the expressions \eqref{beta_0} and \eqref{beta_1} are equivalent, i.e. 
	$$
	\widetilde C_m(\beta_{m2})=\widetilde C_m(\beta_{m1}).
	$$
	The proof is completed.
\end{proof}

Theorem \ref{th1} implies that the estimated  capacity $\widetilde C_m$ stabilizes and does not vary with further increases in  $\beta_m$. Therefore, for  $\beta_m>1$, the estimation of the capacity $\widetilde C_m$ can be calculated only once using the simplified formula in \eqref{limit}, avoiding the repeated calculations for each $\beta_m$. 
Then, this estimation will approach the real value of the capacity $C_m$ with increasing $\beta_m$, as illustrated by  the experimental study detailed in Section \ref{isec4}. 
This is significant for simplifying the analysis and optimization of ultra-dense networks   where $\beta_m$ can vary widely. 
%

According to \eqref{int}, the explicit relationship between $\rho_{\beta_m}(\mathbf{y})$ and $\beta_m$ is difficult to obtain.  Consequently, deriving an explicit expression for $\widetilde{C}_m$ in \eqref{int_limit}  is formidable.
However, when the user nodes are uniformly distributed, a specific limit expression of $\widetilde C_m$ can be determined by the following corollary under the more general condition than in Theorem \ref{th1}, i.e., without ignoring background noise.

\begin{corollary}\label{coro1}
	Assuming that the user density is a constant and the numbers of BSs and users approach infinity, the average cluster capacity per BS is estimated by 
	\begin{equation}\label{Cm_uniform}
		\widetilde C_m=\log \frac{\int_{\mathbf{y} \in \mathcal{D}_0} f\left(\mathbf{x}_{m j}-\mathbf{y}\right)  d \mathbf{y}}{\int_{\mathbf{y} \in \mathcal{D}_0 \backslash \mathcal{D}_m} f\left(\mathbf{x}_{mj}-\mathbf{y}\right) d \mathbf{y}}.
	\end{equation}
	
\end{corollary}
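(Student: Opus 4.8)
The plan is to obtain \eqref{Cm_uniform} directly from the general estimate \eqref{int_limit} by specializing it to a spatially uniform user density and then passing to the ultra-dense limit, in which the background-noise term vanishes on its own rather than being discarded by hand (this is what lets the corollary be stated under the ``noise-retained'' condition, more general than Theorem \ref{th1}). First I would write the uniform density as $\rho_u(\mathbf{y})\equiv\rho_0$, a constant in $\mathbf{y}$, and factor it out of both integrals in \eqref{int_limit}, so the numerator becomes $N_0/P+\rho_0\int_{\mathbf{y}\in\mathcal{D}_0}f(\mathbf{x}_{mj}-\mathbf{y})\,d\mathbf{y}$ and the denominator becomes $N_0/P+\rho_0\int_{\mathbf{y}\in\mathcal{D}_0\setminus\mathcal{D}_m}f(\mathbf{x}_{mj}-\mathbf{y})\,d\mathbf{y}$.

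The second step is to connect $\rho_0$ to the growing number of nodes. By the normalization \eqref{int}, a uniform density satisfies $\rho_0\,|\mathcal{D}_m|=K_m=\beta_m J_m$ over the fixed cluster region $\mathcal{D}_m$, so the hypothesis that the numbers of BSs and users tend to infinity forces $\rho_0\to\infty$. Dividing both numerator and denominator of the specialized \eqref{int_limit} by $\rho_0$ turns the two noise contributions into $N_0/(P\rho_0)$, which tend to $0$ as $\rho_0\to\infty$; passing to the limit then yields exactly \eqref{Cm_uniform}. This recovers, as a genuine limit, the noise-free ratio that Theorem \ref{th1} reached only after explicitly ignoring the background noise, so the same cancellation now emerges automatically from the ultra-dense scaling.

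The step I expect to require the most care is justifying that the limiting ratio in \eqref{Cm_uniform} is well defined, i.e. that both $\int_{\mathcal{D}_0}f(\mathbf{x}_{mj}-\mathbf{y})\,d\mathbf{y}$ and $\int_{\mathcal{D}_0\setminus\mathcal{D}_m}f(\mathbf{x}_{mj}-\mathbf{y})\,d\mathbf{y}$ are finite and the denominator is strictly positive, so that dividing by $\rho_0$ and sending $\rho_0\to\infty$ is legitimate. Finiteness follows from the piecewise form $f(\mathbf{x}-\mathbf{y})=\gamma d^{-\epsilon}_{\mathbf{x}\mathbf{y}}$: on the near field $d_{\mathbf{x}\mathbf{y}}\le d_0$ one has $\epsilon=0$, so $f$ is bounded and the potential singularity at $\mathbf{y}=\mathbf{x}_{mj}$ is removed, while on the far field the kernel decays like $d^{-3.5}$; integrating a bounded-plus-decaying kernel over the bounded region $\mathcal{D}_0$ gives a finite value. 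Strict positivity of the denominator holds whenever $\mathcal{D}_0\setminus\mathcal{D}_m$ has positive Lebesgue measure, i.e. whenever genuine out-of-cluster interference is present, which is precisely the regime of interest. Once these two facts are in hand, the limit interchange is immediate and the corollary follows.
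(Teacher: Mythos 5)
Your proposal is correct and follows essentially the same route as the paper: specialize \eqref{int_limit} to a constant density $\rho_u = K_m/|\mathcal{D}_m|$, divide numerator and denominator by that density so the noise contribution becomes $N_0|\mathcal{D}_m|/(PK_m)$, and let $K_m\to\infty$ to obtain the noise-free ratio \eqref{Cm_uniform}. Your added verification that both integrals of $f$ are finite and the denominator is strictly positive is a welcome bit of rigor the paper leaves implicit, but it does not change the argument.
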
 
\begin{proof}
	When the  user density is a constant,  we simplify $\rho_{u}(\mathbf{y})$ as $\rho_{u}$,  which can be directly calculated as
	$$
	\rho_{u}=\frac{K_m}{|\mathcal{D}_m |}.
	$$
	Here, 	$|\mathcal{D}_m |$  denotes the area of $\mathcal{D}_m$ with a slight abuse of notation. Then, the  average cluster capacity per BS is estimated by
	\begin{align} 
		\widetilde C_m&=\log \frac{\frac{N_0}{P}+\int_{\mathbf{y} \in \mathcal{D}_0} f\left(\mathbf{x}_{m j}-\mathbf{y}\right) \rho_{u}(\mathbf{y}) d \mathbf{y}}{\frac{N_0}{P}+\int_{\mathbf{y} \in \mathcal{D}_0 \backslash \mathcal{D}_m} f\left(\mathbf{x}_{m j}-\mathbf{y}\right) \rho_{u}(\mathbf{y}) d \mathbf{y}}  \notag\\
		&=\log \frac{\frac{N_0}{P}+\frac{K_m}{|\mathcal{D}_m |}\int_{\mathbf{y} \in \mathcal{D}_0} f\left(\mathbf{x}_{m j}-\mathbf{y}\right)  d \mathbf{y}}{\frac{N_0}{P}+\frac{K_m}{|\mathcal{D}_m |}\int_{\mathbf{y} \in \mathcal{D}_0 \backslash \mathcal{D}_m} f\left(\mathbf{x}_{mj}-\mathbf{y}\right) d \mathbf{y}} \notag\\
		&=\log \frac{\frac{N_0}{P}\frac{|\mathcal{D}_m |}{K_m}+\int_{\mathbf{y} \in \mathcal{D}_0} f\left(\mathbf{x}_{m j}-\mathbf{y}\right)  d \mathbf{y}}{\frac{N_0}{P}\frac{|\mathcal{D}_m |}{K_m}+\int_{\mathbf{y} \in \mathcal{D}_0 \backslash \mathcal{D}_m} f\left(\mathbf{x}_{mj}-\mathbf{y}\right) d \mathbf{y}}. \label{cap}
	\end{align}
	Even if  the background noise $N_0$ is not negligible,  the term $N_0|\mathcal{D}_m |/(PK_m)$ still tends to 0 due to the asymptotic assumption of $K_m \rightarrow \infty$. The expression \eqref{cap} can thus  be  written as
	\begin{align} 
		\widetilde C_m=\log \frac{\int_{\mathbf{y} \in \mathcal{D}_0} f\left(\mathbf{x}_{mj}-\mathbf{y}\right)  d \mathbf{y}}{\int_{\mathbf{y} \in \mathcal{D}_0 \backslash \mathcal{D}_m} f\left(\mathbf{x}_{mj}-\mathbf{y}\right) d \mathbf{y}}.
	\end{align}
	The proof is completed.
\end{proof}

We note that in \cite{yang2022}, the same result as \eqref{Cm_uniform} was derived under the assumption of negligible background noise. Conversely, our work establishes that the conclusion of Corollary \ref{coro1} holds even under the more general condition where background noise is considered.

\section{Performance evaluation}\label{isec5}
 In this section, numerical simulations are conducted to assess the high efficiency and exceptional generality of our proposed methods for capacity determination.

\subsection{Network settings}
In our simulations, we explore  two kinds of ultra-dense wireless network scenarios. The detailed designs of these scenarios are as follows, with visualizations provided in Fig. \ref{fig_net}.
\begin{enumerate}
	\item[S1.] The network is circular with a diameter of $2D$.  The locations of BSs and users  obey a homogeneous Poisson point process (PPP) with intensity $\Lambda_b$ and $\Lambda_u$ \cite{Khan2015}.   Therefore, $J$ and $K$ follow  Poisson distributions:
	$$\text{P}( J \ \text{BSs in region} \ \mathcal{D}_0) =\frac{(\Lambda_b\alpha)^Je^{-\Lambda_b\alpha}}{\Gamma(J+1)} \ \text{and}\ \text{P}( K \ \text{users in region} \ \mathcal{D}_0) =\frac{(\Lambda_u\alpha)^Ke^{-\Lambda_u\alpha}}{\Gamma(K+1)},
	$$
	 where $\alpha=\pi D^2$. Moreover, the  distance of the $j$-th node from the center is a random variable (RV)  with the probability distribution function (PDF) 
	$$f_1(D;t)=\frac{2x}{D^2}, 0\leq t\leq D.$$ 
	\item[S2.] The network is square with a side length of $2D$, and both components of the 
	$j$-th node's location, $(x_j,y_j)$, are RVs with the truncated normal distribution function $$
	f_2(\mu, {\sigma}, -D, D ; t)=\left\{\begin{array}{cc}
		\displaystyle\frac{\phi\left({\mu}, {\sigma}^2 ; t\right)}{\Phi\left({\mu}, {\sigma}^2 ; D\right)-\Phi\left({\mu}, {\sigma}^2 ; -D\right)}, & -D\leq t\leq D ; \\
		0, & \text{other},
	\end{array}\right.
	$$
	where $\phi(\cdot)$ and $\Phi(\cdot)$ represent the PDF and cumulative distribution function (CDF) of the normal distribution, respectively. The parameters ${\mu}$ and ${\sigma}$ are the mean and variance of the normal distribution.
\end{enumerate}

\begin{figure}[htbp]
	\centering
	\includegraphics[width=\textwidth]{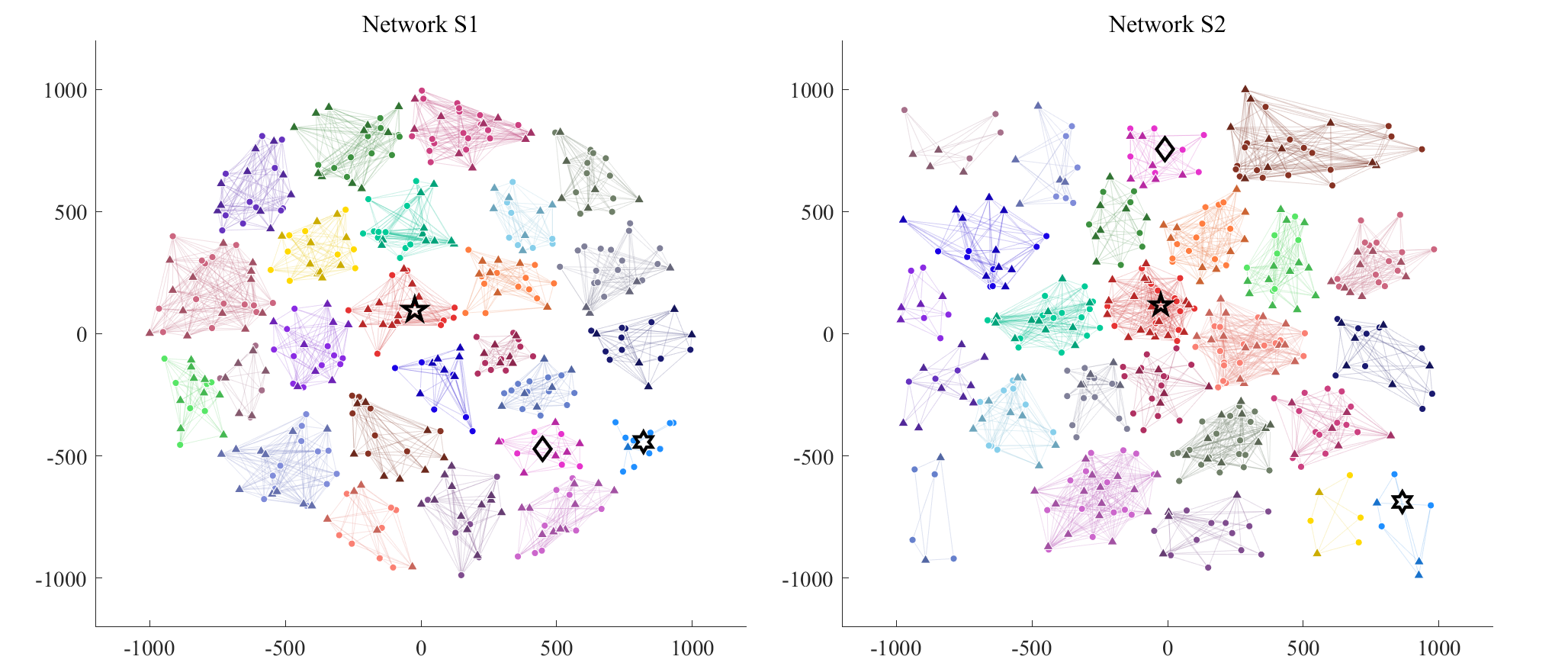}\\
	\caption{Visualization of $\text C^2$ architectures under different network settings. Triangles are users, and circles are BSs. Each color represents  a cluster. The clusters  marked  by the  black pentagon,  the black diamond, and the black hexagon  correspond to the cluster closest to the  network center,  the cluster at the median position\protect\footnotemark, and  the cluster furthest from the network center, respectively.}
	\label{fig_net}
\end{figure}
\footnotetext{Calculate the distances of all clusters to the network center and sort them in descending order,  the cluster corresponding to the median is the one we consider.}

In  network S1, the intensity of BSs  is given by $\Lambda_b=\Omega/(\pi D^2)$, where $\Omega$
is a positive integer, representing the average number of BSs in the region  of $\pi D^2$. With a given $\Lambda_b$, the intensity of users can be calculated according to $\Lambda_u=\Lambda_b\beta$ \cite{yang2022}, where $\beta$ represents the ratio of the number of users to the number of BSs in the entire network. In our two network scenarios, the distribution of users and BSs is exactly the same, so  the $\beta_m$ value in the $m$-th cluster is theoretically equal to the $\beta$ value in the entire network. Due to the randomness of  simulations, $\beta_m$ is around and approximately equal to  $\beta$. In the following experimental analysis, for convenience, we record it as $\beta_m=\beta$.

In network S1, nodes are uniformly distributed, meaning the density of nodes remains constant regardless of the distance from the network center. This is a classical distribution in wireless networks. Conversely, in network S2, nodes follow a truncated normal distribution, which implies that the density off nodes decreases as the distance from the network center increases. This setup mimics real-world scenarios, with a higher concentration of nodes in the central urban area and fewer nodes in the surrounding rural area.

As shown in Fig. \ref{fig_net}, the entire network is partitioned into $M$ non-overlapping  clusters using the K-means algorithm \cite{Lloyd1982}. 
To further validate the generality of our methods, we select three clusters based on their proximity to the network center:  the cluster closest to the network center,  the cluster at the median position, and  the cluster furthest from the network center. The average capacity is then estimated    by averaging over 200 replications for each network scenario.
The setting of basic network parameters  is shown in Table \ref{tab1} below.
\vspace{-1em}
\begin{table}[H]
	\begin{center}
		\caption{The network setting}
		\label{tab1}
		\tabcolsep=1cm
		\begin{tabular}{|c|c|}
			\hline \textbf{Definition and Symbol} & \textbf{Value} \\
			\hline Network scale $(D)$ & $1000 \mathrm{~m}$ \\
			\hline
			Near field threshold $\left(d_{0}\right)$ & $10 \mathrm{~m}$ \\
			\hline
			Far field threshold $\left(d_{1}\right)$ & $50 \mathrm{~m}$ \\
			\hline
			Transmit power $(P)$ & $1 \mathrm{~W}$ \\
			\hline
			Noise power $\left(N_{0}\right)$ & $1 \times 10^{-12} \mathrm{~W}$ \\
			\hline
			Number of clusters $(M)$ & 25 \\
			\hline
			Mean and variance in S2 ($\mu,\sigma^2$) & $(0, 600^2)$\\
			\hline
		\end{tabular}
	\end{center}
\end{table}
\vspace{-1em}

\subsection{Performance evaluation of our methods}

To evaluate the performance of our proposed methods, we compare the capacity  obtained by our approach with those  obtained by TOSE \cite{TOSE2022} and MPM \cite{MPM2022}, with Cholesky decomposition as the baseline. 
According to \cite{TOSE2022}, TOSE requires the selection of the  ratio of spiked eigenvalues  used in the algorithm, which recommends the ratio value set to $0.7$.
In addition, the MPM method in \cite{MPM2022} approximates the channel capacity by an integral form, which requires the choice of the parameter $\eta$ involved in the lower limit.  They suggest that  $4\times 10^{-3}$ is a relatively optimal choice for all cases.
Compared to these two methods, our  methods  offer the advantage of not requiring any parameter tuning.

\begin{figure}[H]
	\centering
	\subfigure{\includegraphics[width=\textwidth]{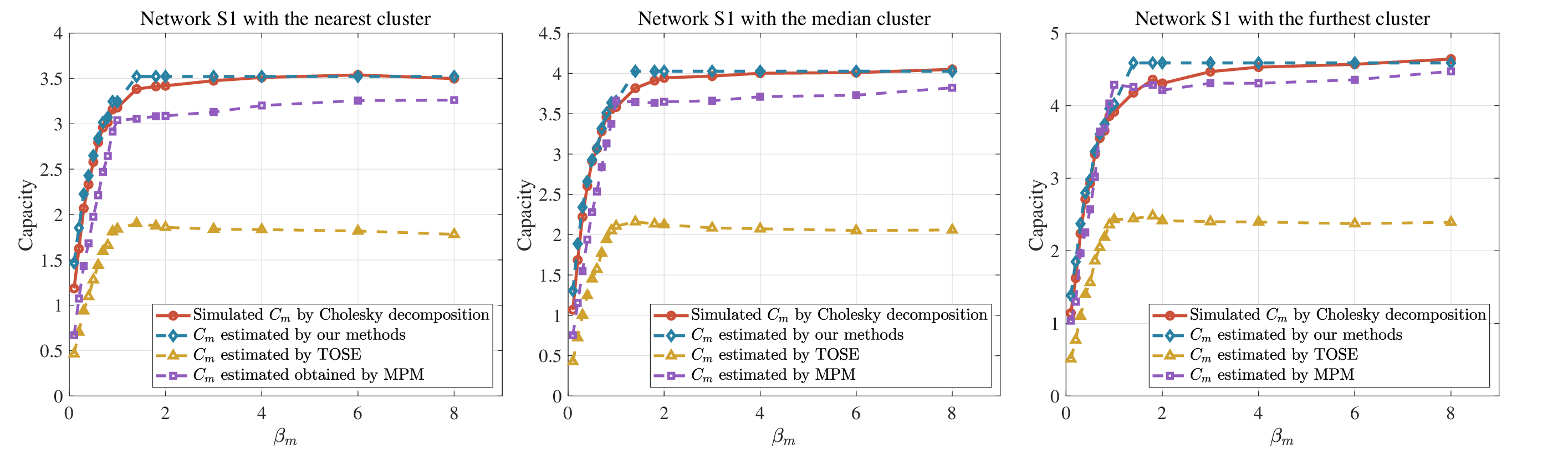}}\\
	\subfigure{\includegraphics[width=\textwidth]{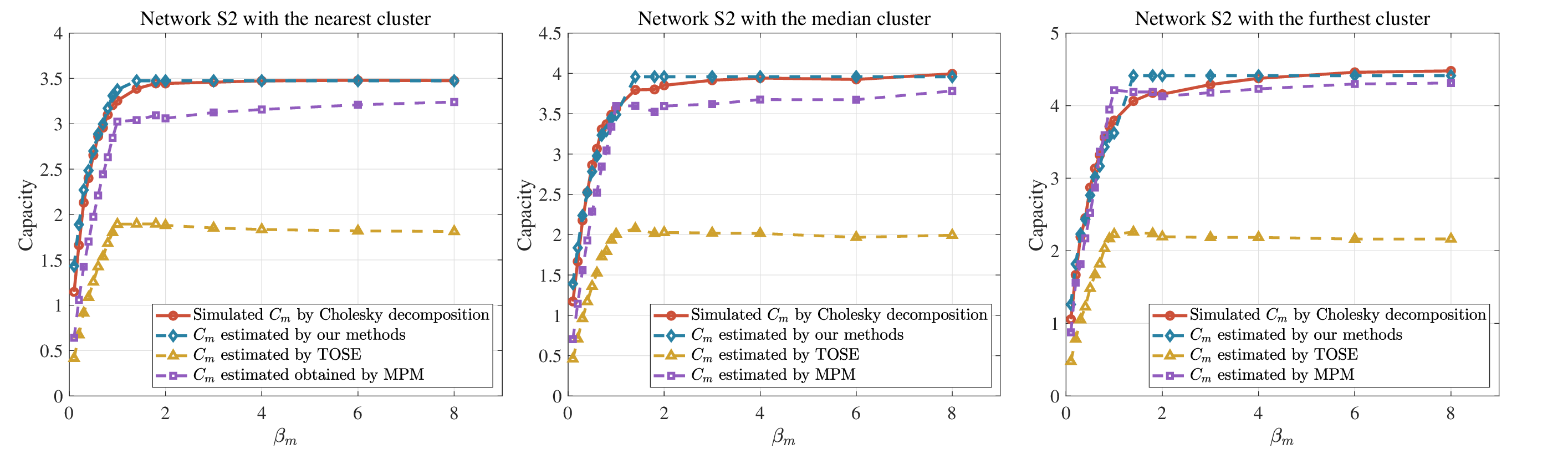}}\\
	\caption{The comparison of the capacity obtained by our methods (blue diamond), TOSE  (yellow triangle), MPM (purple square), and Cholesky decomposition (red circle). The $\beta_m\leq 1$ part of the blue line is obtained by the FISE algorithm, and the  $\beta_m> 1$  part  is calculated by the formula \eqref{limit}.	The first and second rows correspond to the networks S1 and S2. The left, middle, and right columns correspond to the location of  clusters closest to the network center, at the median position, and  furthest from the network center, respectively.}
	\label{fig_capacity1}
\end{figure}
 
Fig. \ref{fig_capacity1} illustrates the capacity comparisons   obtained by different  methods  as $\beta_m$ increases, using Cholesky decomposition  as the baseline for simulating the precise values of
$C_m$ according to equation \eqref{real}.  In network S1,  the intensity of BSs is fixed at  $\Lambda_b=10^{-3}$.	In  network S2, the total number of BSs $J$ across the entire network is set to 5000.
The line with blue diamonds in the figure represents the capacity estimate obtained by our methods.
 Specifically, the capacity for $\beta_m\leq 1$ is obtained using the FISE algorithm. For $\beta_m > 1$, we select the capacity estimate $\widetilde C_m$ when $\beta_m=3$, $\widetilde C_m(3)$, to replace the capacity estimates under all the values of $\beta_m>1$. 

As depicted  in the Fig. \ref{fig_capacity1}, our proposed methods  exhibit superior performance relative to all other methods assessed.  On the one hand, it is observed  that  the FISE method has high accuracy when $\beta_m\leq 1$,  with a relative error in capacity from the baseline results (represented by red circles) of less than 2\%. Conversely, the TOSE method (represented by yellow  triangles) shows significantly poorer performance. 
Although the capacity estimation obtained by MPM (represented by purple squares)  outperforms   TOSE,  it still exhibits greater errors compared to our methods. The errors in MPM stem from the assumption of diagonal noise-plus-interference matrix in the limiting regime, which might not hold accurately for small $\beta_m$, as discussed in the Introduction.
On the other hand, using the capacity estimates $\widetilde C_m(3)$  to replace capacity estimates for all $\beta_m>1$ results in remarkable accuracy across various network settings. As $\beta_m$ increases, the estimation becomes increasingly accurate and  is almost the same as the baseline result, which verifies the validity of Theorem \ref{th1}.

\begin{figure}[ht]
	\centering
	\includegraphics[width=\textwidth]{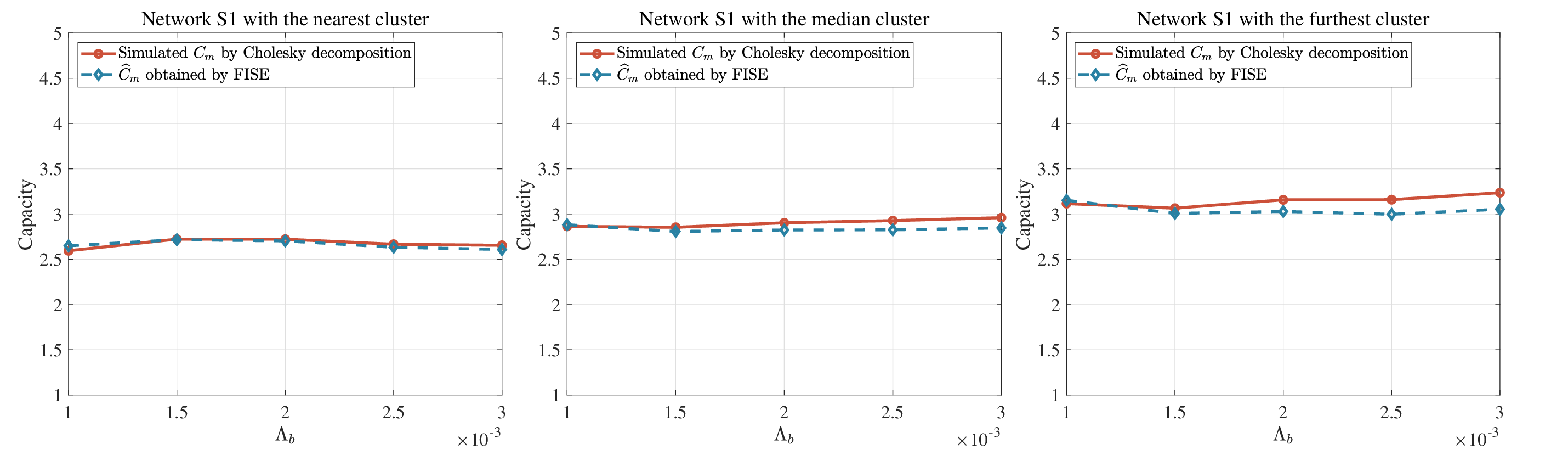}
	\includegraphics[width=\textwidth]{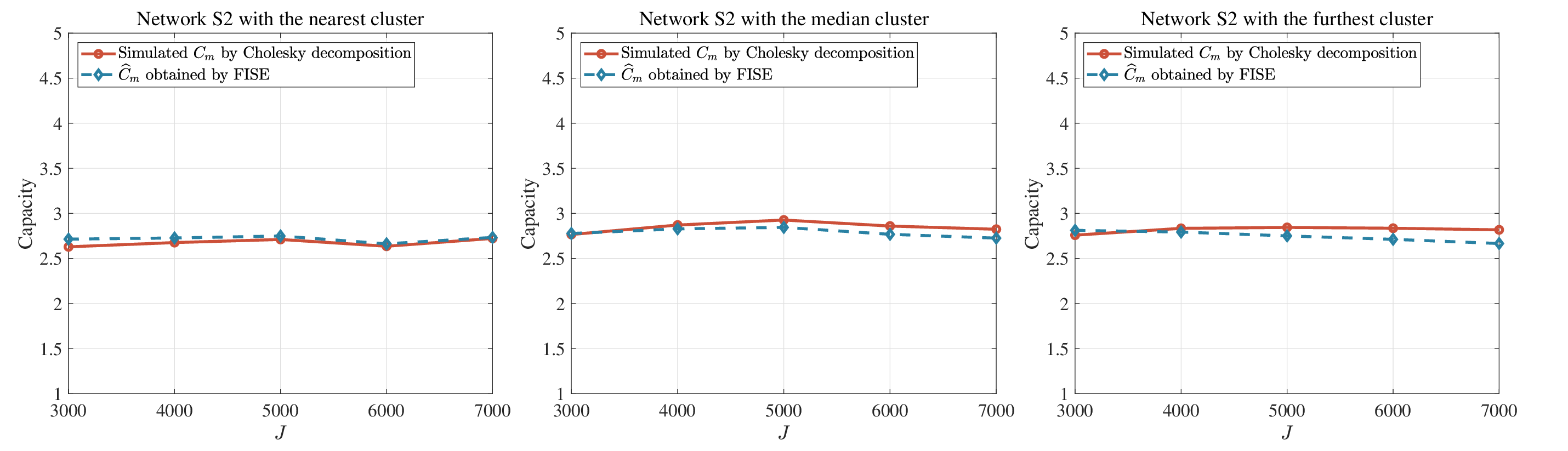}
	\caption{ The comparison of the capacity  calculated by Cholesky decomposition (Solid line) and FISE (dashed line) when $\beta_m=0.5$. The first and second rows correspond to the networks S1 and S2. The left, middle, and right columns correspond to the location of  clusters closest to the network center, at the median position, and  furthest from the network center, respectively. }
	\label{fig_capacity2}
\end{figure}

In Fig. \ref{fig_capacity2}--\ref{fig_capacity3}, we plot to   compare the capacity achieved using our proposed methods versus the Cholesky decomposition method by increasing the density of nodes while maintaining a fixed value of $\beta_m$, equivalently fixed  $\beta$. 
Specifically, in network S1, we augment the node density by varying the intensity of BSs, $\Lambda_b$. Simultaneously, we increase the intensity of users according to $\Lambda_u=\Lambda_b\beta$. For network S2, the density of nodes is escalated by increasing the number of BSs, $J$, while  proportionally increasing the number of users to $K=J\beta$. 
Fig. \ref{fig_capacity2} shows  the capacity comparison  in different scenarios for $\beta_m =0.5$, and Fig. \ref{fig_capacity3} depict similar results for $\beta_m=4$. It can be observed that the capacity estimation remains nearly constant with a fixed $\beta_m$. Our proposed estimates closely align with baseline results across all scenarios, demonstrating high accuracy and robustness suitable for real-world deployments.

\begin{figure}[ht]
	\centering
	\includegraphics[width=\textwidth]{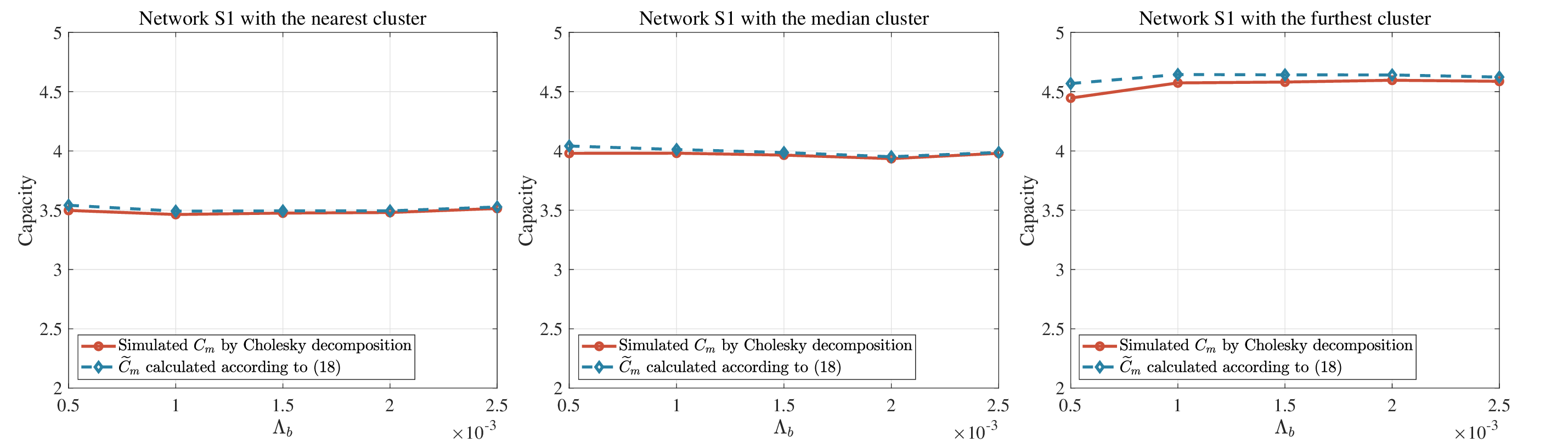}
	\includegraphics[width=\textwidth]{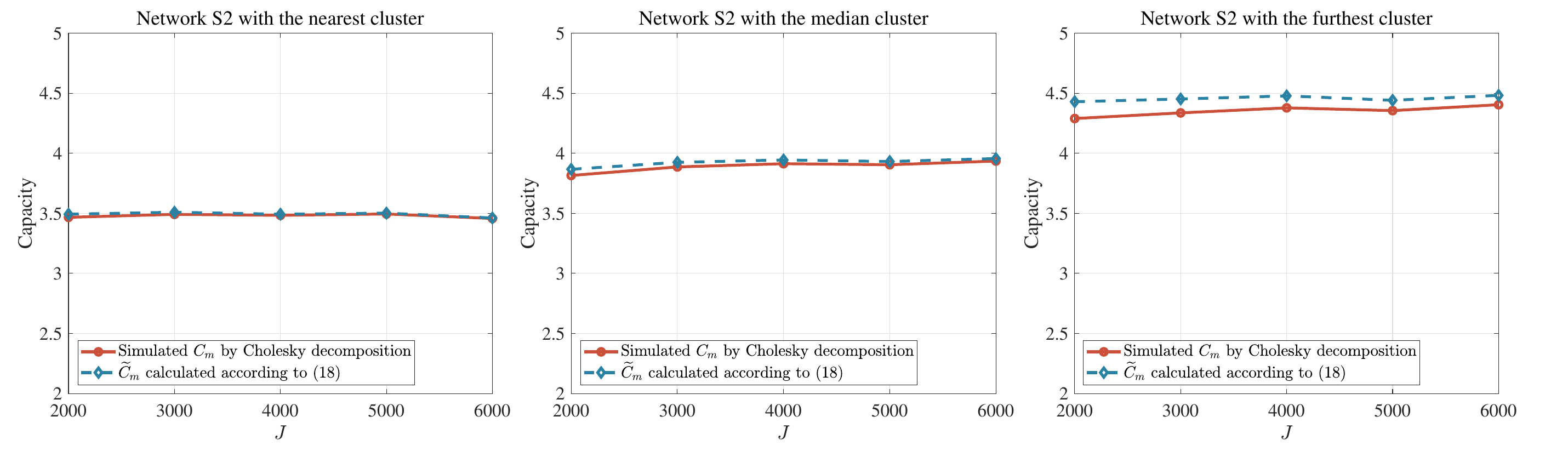}
	\caption{  The comparison of the capacity calculated by Cholesky decomposition (Solid line) and the formula \eqref{limit} (dashed line) when $\beta_m=4$. The first and second rows correspond to the networks S1 and S2. The left, middle, and right columns correspond to the location of  clusters closest to the network center, at the median position, and  furthest from the network center, respectively.}
	\label{fig_capacity3}
\end{figure}
Furthermore, Fig.  \ref{fig_capacity1}--\ref{fig_capacity3}  also demonstrate the generality of our proposed methods. They exhibit high accuracy across different network node distributions,  network area shapes,  cluster  locations, and different values of $\beta_m$. 

\section{Conclusion}\label{isec6}
This paper develops efficient methods for estimating the channel capacity in future ultra-dense wireless networks with random interference, which are fast, accurate and general.  
For $\beta_m\leq 1$, the FISE algorithm is proposed  to realize the fast  estimation of eigenvalues by treating the interference as random. 
According to the analysis, the FISE algorithm has a linear time complexity, much lower than the polynomial time of most existing methods. 
For $\beta_m>1$, we introduce  a computationally simple expression for capacity estimation and prove that it is a stable value that is independent of $\beta_m$. The estimations are also almost the same as the baseline results but with reduced complexity. 
Numerical experiments demonstrate  the high accuracy and superior generality of our proposed methods. Regardless of the network shape,  node  distribution, and cluster location,  our methods provide nearly identical capacity estimations to the baseline method. In future work, we will study the limiting theory of the  Hadamard product between large-dimensional matrices involved in the signal model \eqref{signal}, further achieve even more efficient capacity estimations.




 \bibliographystyle{IEEEtran}
 
 \bibliography{IEEE_Fisher}
%

\end{document}